
\typeout{IJCAI-19 Instructions for Authors}


\documentclass{article}
\pdfpagewidth=8.5in
\pdfpageheight=11in
\usepackage{ijcai19}

\usepackage{times}
\usepackage{soul}
\usepackage{url}
\usepackage[utf8]{inputenc}
\usepackage[small]{caption}
\usepackage{graphicx}
\usepackage{amsmath}
\usepackage{booktabs}
\usepackage{algorithm}
\usepackage{algorithmic}
\usepackage{amssymb}
\usepackage{amsthm}
\usepackage{color}

\newtheorem{myTheorem}{Theorem}

\newtheorem{myDefinition}{Definition}

\newtheorem{myLemma}{Lemma}

\urlstyle{same}

\title{Pure Strategy Best Responses to Mixed Strategies in Repeated Games}

\author{
Shiheng Wang
 \And
 Fangzhen Lin
\affiliations
 Hong Kong University of Science and Technology
\emails
 \{swangbv, flin\}@cse.ust.hk
}

\begin{document}
\maketitle

\begin{abstract}
  Repeated games are difficult to analyze, especially when agents play mixed strategies.
  We study one-memory strategies in iterated prisoner's dilemma, then generalize the result to k-memory strategies in repeated games.
  Our result shows that there always exists a pure strategy best response, which can be computed with SMT or MDP solvers.
  However, there may not exist such pure strategy best response in multi-agent tournaments.
  All source code is released for verification(see additional files).
\end{abstract}

\section{Introduction}
    Repeated games are hard to analyze. It's well known that Nash equilibrium is the solution concept in one-shot games, whereas there exist infinite Nash equilibria in repeated games according to the folk theorem. In most repeated games, a player responses to the previous actions of the other player, thus the player's strategy need to be analysed with dynamic game theory~\cite{HandbookDynamic2018}. When the game is infinitely repeated, one cannot simply add up the payoff of each round, which in general will be infinite. Average rewards or discounted rewards are usually used to evaluate a strategy, and we focus on the former in this paper.

    In repeated games, a player's strategy is a function from histories of interactions to actions. Sometimes one restricts strategies to some specific forms, such as Turing machines~\cite{chen2015bounded,knoblauch1994computable,megiddo1986play}, finite automata~\cite{rubinstein1986finite,ben1990complexity,gilboa1988complexity,zuo2015optimal}, ones with limited memories ~\cite{hauert1997effects,lindgren1992evolutionary,chen2017k}, and other forms of bounded rationality~(e.g. \cite{osborne1994course,shoham2008multiagent}).

    A mixed strategy maps histories to a probability distribution of actions. \cite{press2012iterated} concluded that shortest memory sets the rule of the game, then \cite{chen2017k} pointed out the best response to a k-memory strategy in infinitely repeated games should also be k-memory. We further explore the best response to mixed strategies. Our result shows that there always exists a pure strategy best response in two-agent infinitely repeated games, but it's not the case in multi-agent tournaments.

    \begin{table}[t]
    \caption{Prisoner's Dilemma}\label{tab:PD}
      \centering
        \begin{tabular}{|c|c|c|}
          \hline
            &   c   & d \\
          \hline
          c & (R,R) & (S,T) \\
          \hline
          d & (T,S) & (P,P) \\
          \hline
        \end{tabular}
    \end{table}

     In the first place we study a concrete repeated game, namely the iterated prisoner's dilemma (IPD). It involves two agents playing repeatedly the Prisoner's Dilemma (PD) in table~\ref{tab:PD}. In the PD, each player can choose between Cooperate (c) and Defect (d). If both choose c, they receive a payoff of $R$ (rewards); If both choose d, they receive a payoff of $P$ (penalty); If one chooses c and the other d, the defector receives a payoff of $T$ (temptation to defect) and the cooperator receives a payoff of $S$ (sucker's payoff). The assumption is that $T>R>P>S$ and that $2*R > T + S$, which makes $(d,d)$ the only equilibrium in one-shot game, but cooperation provides more utility in the long run.

    When both agents take one-memory mixed strategies, the IPD can be modeled as a Markov chain, whose stationary distribution can be computed according to~\cite{press2012iterated} and the best response can be solved using an SMT solver like Z3~\cite{de2008z3}. We give a method to compute the best response and then summarize the best response to some popular strategies. In the meanwhile, our analysis explains the behavior of evolutionary agents. The solves a problem in~\cite{press2012iterated} as we can now formally prove the behaviour of these agents.

    Some of our results on one-memory strategies can be generalized to k-memory mixed strategies. In order to compute the best response to a completely mixed strategy, we build a Markov decision process (MDP) and compute its optimal policy. Later we prove that this is a communicating MDP, which has a pure optimal policy independent of the initial state.
    This shows that there always exists a pure strategy best response to any completely mixed strategy in a repeated game.

    However, this result does not hold for tournaments with more than two agents.
    We use a similar Markov chain model to compute the best strategy in a tournament of several one-memory agents, and show that the best response can not be a pure strategy.

    In this paper, when a strategy $\mathbf{q}$ is given, we calculate its best response $\mathbf{p}$.
    Most calculations or proofs are too laboursome to be done manually, so we show how state-of-art computer solvers assist us in solving this classic economic problem and bringing us new insights. In most cases, our program solved the problems in seconds. We release all our source code for verification.

    The rest of this paper is organized as follows. First we study one-memory strategies of the iterated prisoner's dilemma. Then we model k-memory problems into MDP and solve it with existing algorithms.
    Later we analyze a multi-agent tournament which serves as a counter example. Finally we give some discussion and concluding remarks.

\section{One-Memory Strategies in IPD}\label{sec:ipd}
\subsection{One-Memory Strategies}
In the section we consider a concrete example where both players play one-memory strategies in the iterated prisoner's dilemma (IPD).

The PD game in Table~\ref{tab:PD} is played by two players X and Y for infinite rounds.
The score (or payoff) of either player is calculated by average score in the limit (cf. \cite{shoham2008multiagent}).
Given an infinite sequence of scores $s_i^{(1)},s_i^{(2)},...$ for player $i \in \{X,Y\}$, the average score of $i$ is
\begin{equation}\label{eq:avarageScore}
    s_i = \lim_{k \to \infty } \frac{\sum_{j=1}^{k}s_i^{(j)}}{k}
\end{equation}

Although a player can have unlimited memory and decide what to do based on the entire history of the interactions so far, one-memory strategies base their response only on the outcome of the previous round. Press and Dyson \cite{press2012iterated} proved that shortest-memory player sets the rules of the game, in the sense that, for any strategy of the longer-memory player Y, X's score is exactly the same as if Y had played a certain shorter memory strategy, disregarding any history in excess of that shared with X. This conclusion enables us to focus on one-memory strategies.

A one-memory strategy consists of an initial state $p_0$ (the probability to cooperate in the first round) and a vector $\textbf{p}=(p_1,p_2,p_3,p_4)=(p_{cc},p_{cd},p_{dc},p_{dd})$ where $p_z$ is the probability of playing \emph{Cooperate} when the outcome $z$ occurred in the previous round.

If X uses the initial probability $p_0$ and strategy $\textbf{p}=(p_1,p_2,p_3,p_4)$, Y uses the initial probability $q_0$ and strategy $\textbf{q}=(q_1,q_2,q_3,q_4)$, then the probability distribution of the first iteration is
    $\textbf{v} ^1=(p_0q_0,p_0(1-q_0),(1-p_0)q_0,(1-p_0)(1-q_0))$ and the successive outcomes follow a Markov chain with transition matrix given by:
        \begin{equation}\label{MarkovMatrix}
         \mathbf{M} =
        \begin{pmatrix}
          p_1q_1 & p_1(1-q_1) & (1-p_1)q_1 & (1-p_1)(1-q_1) \\
          p_2q_3 & p_2(1-q_3) & (1-p_2)q_3 & (1-p_2)(1-q_3) \\
          p_3q_2 & p_3(1-q_2) & (1-p_3)q_2 & (1-p_3)(1-q_2) \\
          p_4q_4 & p_4(1-q_4) & (1-p_4)q_4 & (1-p_4)(1-q_4)
        \end{pmatrix}\nonumber
        \end{equation}
The probability distribution in r-th iteration $\mathbf{v}^r$ over the set of outcomes is a non-negative vector with unit sum, indexed by four states, $\mathbf{v}^r=(v_{cc}^r,v_{cd}^r,v_{dc}^r,v_{dd}^r)=(v_1^r,v_2^r,v_3^r,v_4^r)$.

Notice that we define outcome from each player's perspective, for example, if the outcome is $cd$ from X's perspective, it is $dc$ from Y's.
$\mathbf{v}$ is defined from player X's perspective, meaning that $v_{cd}$ refers to X plays C and Y plays D. If the previous outcome is $cd$, X's probability of cooperation is $p_2$ while Y's is $p_3$. See Table~\ref{tab:outcomeStrategy} for this correspondence.

Thus each entry of $\mathbf{M}$ represents the probability of transition between different states, which satisfies
\begin{equation}\label{eq:iteratev}
  \mathbf{M}\mathbf{v}^r=\mathbf{v}^{r+1}
\end{equation}
In accordance with \cite{akin2016iterated}, we will call $\mathbf{M}$ convergent when $\mathbf{M}$ has a unique stationary distribution of $\mathbf{v}$, which satisfies,
    $$\mathbf{M}\mathbf{v}=\mathbf{v}$$

\cite{press2012iterated} gave a determinant representation of player's payoff when it is calculated by the limit of average.
For an arbitrary four-vector $\mathbf{f}=(f_1,f_2,f_3,f_4)$ , let
    \begin{equation}\label{eq:Dpqf}
     D(\mathbf{p},\mathbf{q},\mathbf{f}) =
                                \begin{vmatrix}
                                    p_1q_1 - 1    & p_1-1          &  q_1-1        & f_1 \\
                                    p_2q_3        & p_2-1          &  q_3          & f_2 \\
                                    p_3q_2        & p_3            &  q_2-1        & f_3 \\
                                    p_4q_4        & p_4            &  q_4          & f_4 \\
                                \end{vmatrix}
    \end{equation}
    Then the average payoff of $s_X$ and $s_Y$ can be calculated with payoff vector $\mathbf{S_X}=(R,S,T,P)$ and $\mathbf{S_Y}=(R,T,S,P)$.
    \begin{equation}\label{eq:sXsY}
        s_{X} = \frac{D(\mathbf{p},\mathbf{q},\mathbf{S_X})}{D(\mathbf{p},\mathbf{q},\mathbf{1})}, \quad
        s_{Y} = \frac{D(\mathbf{p},\mathbf{q},\mathbf{S_Y})}{D(\mathbf{p},\mathbf{q},\mathbf{1})}
    \end{equation}

\begin{table}[t]
  \begin{center}
    \caption{Outcome and Strategy}\label{tab:outcomeStrategy}
    \begin{tabular}{|c|c|c|c|c|}
      \hline
      Outcome$^*$  & $cc$ & $cd$ & $dc$ & $dd$ \\ \hline
      Strategy of $\mathbf{p}$ & $p_1$ & $p_2$ & $p_3$ & $p_4$ \\ \hline
      Strategy of $\mathbf{q}$ & $q_1$ & $q_3$ & $q_2$ & $q_4$ \\
      \hline
    \end{tabular}\\
  \end{center}
  \begin{flushright}
    {\small * Outcome is defined from $\mathbf{p}$'s perspective}
  \end{flushright}
\end{table}

\subsection{Ergodic Markov Chain} \label{sec:ergodicOneMem}
    According to \cite{akin2016iterated}, the following statements on stationary distribution are equivalent.
    \begin{itemize}
      \item There is a unique stationary distribution $\mathbf{v}$ in accordance with $\mathbf{M}$.
      \item The stationary distribution $\mathbf{v}$ is independent of the initial distribution $\mathbf{v}^1$, and thus $p_0$ and $q_0$.
      \item There is no absorbing states (trapped states) in the Markov chain.
      \item $ D(\mathbf{p},\mathbf{q},\mathbf{1}) \neq 0 $
    \end{itemize}

    To avoid $ D(\mathbf{p},\mathbf{q},\mathbf{1}) = 0 $, we assume $\mathbf{q}$ plays a \textbf{completely mixed} strategy, that is, $q_i \in (0,1)$.
    For the convenience of defining best response, we assume $p_i\in [0,1]$ and $\mathbf{p} \neq (1,1,0,0)$ (namely strategy \emph{Repeat}). The last assumption makes sense because if $\mathbf{p}$ plays $Cooperate$ in the first round, strategy \emph{Repeat} is equivalent to \emph{Always Cooperate}, and otherwise to \emph{Always Defect}. Under these assumptions, we can prove
    $ D(\mathbf{p},\mathbf{q},\mathbf{1}) \neq 0 $ with the SMT solver Z3~\cite{de2008z3}.

    \begin{myTheorem}\label{thm:DNot0}
        Assume $p_i\in [0,1]$ and $q_i \in (0,1)$, and that $\mathbf{p} \neq (1,1,0,0)$, then $D(\mathbf{p},\mathbf{q},\mathbf{1}) < 0$.
    \end{myTheorem}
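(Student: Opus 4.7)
The plan is to establish $D(\mathbf{p},\mathbf{q},\mathbf{1}) < 0$ in two stages. First, show that $D$ never vanishes on the admissible region; then pin down its sign as negative by a single concrete evaluation together with a continuity argument.

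For the non-vanishing stage, I would use the equivalences recalled just above the theorem: $D(\mathbf{p},\mathbf{q},\mathbf{1}) \neq 0$ is the same as the Markov chain $\mathbf{M}$ having a unique stationary distribution, which in turn amounts to $\mathbf{M}$ possessing exactly one closed communicating class. Because $\mathbf{q}$ is completely mixed, after X plays any action $a$ player Y responds with $c$ and with $d$ both with positive probability; therefore any closed class that contains a state from which X mixes (some $p_i \in (0,1)$) is forced to contain all four outcomes. The only remaining way to obtain two disjoint closed classes is for X to play purely from every state and for the state space to split along X's previous action into $\{cc,cd\}$ and $\{dc,dd\}$. Inspecting the transitions, this split forces $p_1 = p_2 = 1$ and $p_3 = p_4 = 0$, i.e.\ the Repeat strategy $\mathbf{p} = (1,1,0,0)$, which is excluded by hypothesis. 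Hence $D(\mathbf{p},\mathbf{q},\mathbf{1}) \neq 0$ on the whole admissible set.

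For the sign, I would evaluate $D$ at the symmetric point $\mathbf{p} = \mathbf{q} = (1/2,1/2,1/2,1/2)$. Subtracting row~1 from rows 2--4 reduces column~4 to a single nonzero entry, and the remaining $3 \times 3$ minor equals $1$, yielding $D = -1$. The admissible domain $([0,1]^4 \setminus \{(1,1,0,0)\}) \times (0,1)^4$ is path-connected (removing a single point from a four-dimensional cube leaves it connected), and $D$ is a polynomial, so the non-vanishing from the first stage together with the intermediate value theorem forces $D < 0$ everywhere on the domain.

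The main obstacle is the case analysis in the first stage: several boundary configurations of $\mathbf{p}$ (some coordinates in $\{0,1\}$ while others are interior) must be dismissed to verify that only $\mathbf{p} = (1,1,0,0)$ produces two disjoint closed classes. This is precisely where an SMT solver pays off: because the full assertion is a first-order sentence over the reals about a polynomial of low degree, Z3 can discharge it mechanically and spare us the enumeration of corner cases.
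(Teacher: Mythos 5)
Your proof is correct but follows a genuinely different route from the paper's. The paper's proof is essentially a single black-box step: negate the statement, hand it to Z3, and report ``unsatisfiable''; the only human-readable content is the accompanying remark that $D(\mathbf{p},\mathbf{q},\mathbf{1})$ is affine (hence monotone) in each $p_i$, so it suffices to certify negativity at the $2^4$ corner strategies. You instead (i) derive $D(\mathbf{p},\mathbf{q},\mathbf{1})\neq 0$ structurally, by invoking the equivalence stated just before the theorem (after Akin) between $D\neq 0$ and uniqueness of the stationary distribution, and observing that with $\mathbf{q}$ completely mixed the chain can have two disjoint closed classes only if $\{cc,cd\}$ and $\{dc,dd\}$ are both closed, which forces $p_1=p_2=1$, $p_3=p_4=0$, i.e.\ exactly the excluded \emph{Repeat} strategy; and (ii) pin down the sign by one evaluation ($D=-1$ at $p_i=q_i=1/2$, which checks out) together with path-connectedness of $\bigl([0,1]^4\setminus\{(1,1,0,0)\}\bigr)\times(0,1)^4$ and the intermediate value theorem. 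Your version explains \emph{why} the hypothesis $\mathbf{p}\neq(1,1,0,0)$ is precisely the right exclusion and is verifiable by hand; the trade-offs are that you lean on the cited but unproved equivalence between $D\neq 0$ and ergodicity, and that your closed-class case analysis is stated tersely and should be written out (under a completely mixed $\mathbf{q}$ the only candidate closed sets are $\{cc,cd\}$, $\{dc,dd\}$ and the whole space, and the first two are simultaneously closed only for \emph{Repeat}). Note also that the paper's corner-enumeration shortcut and your connectedness step both exploit the same multilinearity of $D$, just in different directions.
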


    \begin{proof}
        The negation of Theorem~\ref{thm:DNot0} is, $\exists \mathbf{p}, \mathbf{q}$
        \begin{align}\label{eq:prfDNot0}
            & \big ( \bigwedge_{i=1,2,3,4} 0 \leq p_i \leq 1 \big ) \quad \land \quad \big ( \bigwedge_{j=1,2,3,4} 0 < q_j < 1  \big ) \nonumber \\
            & \land \quad \big ( \mathbf{p} \neq (1,1,0,0) \big)\quad \land \quad \big ( D(\mathbf{p},\mathbf{q},\mathbf{1}) \geq 0 \big )
        \end{align}

        \emph{Z3} returns ``unsatisfiable'' to (\ref{eq:prfDNot0}), meaning that its value is always \emph{False}. Thus Theorem~\ref{thm:DNot0} is always \emph{True}.
    \end{proof}
    In practice we accelerate this proof with domain knowledge. We first prove that $D(\mathbf{p},\mathbf{q},\mathbf{1})$ is monotonic to $p_i$ by calculating derivatives(see also proof of Thm~\ref{thm:monotone}), then calculate extrema by letting $p_i \in \{0,1\}$. Since all extrema are less than zero, the determinant is less than zero.

\subsection{Symbolic Calculation}
     \begin{myTheorem}[Monotonicity]\label{thm:monotone}
         Suppose $p_i\in [0,1]$ and $q_i \in (0,1)$, $\mathbf{p} \neq (1,1,0,0)$, $s_X, s_Y$ are defined in equation (\ref{eq:sXsY}), and $z \in \{p_1,...,p_4,q_1,...,q_4\}$. \\
         When all variables in $\{p_1,...,p_4,q_1,...,q_4\}\backslash \{z\}$ are fixed, $s_X$ is monotonic to $z$.
     \end{myTheorem}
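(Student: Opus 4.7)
The plan is to exploit the multilinearity of the determinant. The crucial observation is that in the matrix defining $D(\mathbf{p},\mathbf{q},\mathbf{f})$ in equation~(\ref{eq:Dpqf}), each variable $z \in \{p_1,p_2,p_3,p_4,q_1,q_2,q_3,q_4\}$ occurs only in a single row: $p_i$ appears only in row $i$, and $q_1,q_3,q_2,q_4$ appear only in rows $1,2,3,4$ respectively. Since the determinant is a linear function of each row, $D(\mathbf{p},\mathbf{q},\mathbf{f})$ is an affine function of $z$ for each choice of the remaining variables and of $\mathbf{f}$.

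Consequently, I would write, for fixed values of the other seven variables,
\begin{equation*}
D(\mathbf{p},\mathbf{q},\mathbf{S_X}) = a\,z + b, \qquad D(\mathbf{p},\mathbf{q},\mathbf{1}) = c\,z + d,
\end{equation*}
where $a,b,c,d$ are constants depending on the fixed variables. By equation~(\ref{eq:sXsY}),
\begin{equation*}
s_X(z) \;=\; \frac{a\,z + b}{c\,z + d},
\end{equation*}
so $s_X$, viewed as a function of $z$ alone, is a Möbius transformation. Its derivative is
\begin{equation*}
\frac{\partial s_X}{\partial z} \;=\; \frac{a d - b c}{(c z + d)^2}.
\end{equation*}

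The denominator is strictly positive wherever it is defined, and by Theorem~\ref{thm:DNot0}, under the hypotheses $p_i\in[0,1]$, $q_i\in(0,1)$, $\mathbf{p}\neq(1,1,0,0)$, we in fact have $cz+d = D(\mathbf{p},\mathbf{q},\mathbf{1}) < 0$ throughout the admissible range of $z$, so no blow-up occurs. The numerator $ad-bc$ does not depend on $z$, hence its sign is constant over the whole interval where $z$ varies. Therefore $\partial s_X/\partial z$ has constant sign, which gives monotonicity in $z$. The same argument applies uniformly whether $z$ is a $p_i$ or a $q_j$, since each of the eight variables enters only one row.

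I expect the argument to be essentially straightforward; the only subtle point is confirming that every variable in the list really appears in only one row (a quick inspection of the matrix), and that the denominator never vanishes on the admissible domain — both of which are handled by the multilinearity observation and Theorem~\ref{thm:DNot0} respectively. The degenerate cases $c=0$ or $ad-bc=0$ pose no difficulty: in the former, $s_X$ is affine in $z$, and in the latter, $s_X$ is constant in $z$, and constants/affine functions are monotonic. No case analysis on the sign of $ad-bc$ is needed, because monotonicity is asserted for each fixed setting of the remaining variables separately.
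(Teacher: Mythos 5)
Your proposal is correct and follows the same skeleton as the paper's proof: apply the quotient rule to $s_X = D(\mathbf{p},\mathbf{q},\mathbf{S_X})/D(\mathbf{p},\mathbf{q},\mathbf{1})$, observe that the numerator of the resulting derivative does not depend on $z$, and note that the denominator is a strictly positive square (non-vanishing by Theorem~\ref{thm:DNot0}), so the derivative has constant sign. The one genuine difference is how the key fact --- that the numerator $ad-bc$ of $\partial s_X/\partial z$ is independent of $z$ --- is established. The paper verifies it by symbolic computation (it defines $U = \partial M/\partial z \cdot N - \partial N/\partial z \cdot M$ and checks $\partial U/\partial z = 0$ with SymPy/MATLAB), whereas you derive it structurally: each of the eight variables occurs in exactly one row of the matrix in equation~(\ref{eq:Dpqf}), so by row-multilinearity both $M$ and $N$ are affine in $z$, making $s_X$ a M\"obius function of $z$ whose derivative numerator is the constant $ad-bc$. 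Your route buys a human-checkable conceptual reason for the step the paper delegates to a computer algebra system, and it handles the $p_i$ and $q_j$ cases uniformly; the paper's route requires no inspection of the matrix structure but leans on machine verification. Your treatment of the degenerate cases ($c=0$ or $ad-bc=0$) is also sound, since affine and constant functions are monotonic.
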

     \begin{proof}
        According to Theorem~\ref{thm:DNot0}, $D(\mathbf{p},\mathbf{q},\mathbf{1}) \neq 0$.
        $$ s_{X} = \frac{D(\mathbf{p},\mathbf{q},\mathbf{S_X})}{D(\mathbf{p},\mathbf{q},\mathbf{1})} $$
        For simplicity, let
        $$ M =D(\mathbf{p},\mathbf{q},\mathbf{S_X}), N = D(\mathbf{p},\mathbf{q},\mathbf{1}) $$
        Then
        $$ s_X = {M} \big / {N} $$
        Without loss of generality, get partial derivative to $p_1$,
        \begin{equation}\label{derivative}
        \frac{\partial s_X}{\partial p_1} = \frac{\partial M \big / \partial p_1 \cdot N  - \partial N \big / \partial p_1 \cdot M}{N^2}
        \end{equation}
        Let
        $$ U =  \partial M \big / \partial p_1 \cdot N  - \partial N \big / \partial p_1 \cdot M $$
        Again, get partial derivative of $U$ to $p_1$, $${\partial U} \big / {\partial p_1} = 0 $$
        In Eq. (\ref{derivative}), since $N^2 > 0$, and the numerator is not a function of $p_1$, we can draw the conclusion that, when $\{p_2,p_3,p_4,q_1,q_2,q_3,q_4\}$ are fixed, $s_X$ is monotonic in terms of $p_1$.

        Similarly, $s_X$ is also monotonic to any variable in $p_i,q_i$ when other variables are fixed.
     \end{proof}

     As symbolic calculations of determinant are laboursome, we use a \emph{Python} library \emph{SymPy}~\cite{sympy2017} and double checked with \emph{MATLAB}.

     Given a fixed completely mixed strategy $\mathbf{q}$, we want to compute its best response $\mathbf{p}$. Theorem~\ref{thm:monotone} implies that there exists a pure one-memory strategy best response. We only need to enumerate all pure strategy $\mathbf{p}$, and one of them must be a best response to $\mathbf{q}$.

     \begin{myTheorem}
        There always exists an one-memory pure strategy best response to completely mixed one-memory strategies.
     \end{myTheorem}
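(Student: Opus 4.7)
The plan is to derive this as a direct corollary of Theorem~\ref{thm:monotone}. Since $s_X$ is monotonic in each coordinate of $\mathbf{p}$ when all other arguments are held fixed, the maximum of $s_X$ over the hypercube $[0,1]^4$ must be attained at one of the $2^4=16$ vertices, which are precisely the pure one-memory strategies. No optimization or calculus is needed beyond what is already packaged in Theorem~\ref{thm:monotone}.

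Concretely, I would fix an arbitrary completely mixed $\mathbf{q}$, take any candidate $\mathbf{p}=(p_1,p_2,p_3,p_4)\in[0,1]^4$, and apply Theorem~\ref{thm:monotone} four times: the monotonicity of $s_X$ in $p_1$ lets me replace $p_1$ by whichever of $0$ or $1$ yields the larger score, obtaining a new $\mathbf{p}'$ with $s_X(\mathbf{p}',\mathbf{q}) \geq s_X(\mathbf{p},\mathbf{q})$. Iterating this coordinate-wise rounding through $p_2,p_3,p_4$ produces a pure strategy that is no worse than $\mathbf{p}$, so the supremum of $s_X$ is realised by some pure strategy, identifiable by evaluating equation~(\ref{eq:sXsY}) for each of the finitely many pure strategies. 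The initial move $p_0$ does not need to be optimised separately, since Theorem~\ref{thm:DNot0} guarantees that the stationary distribution, and hence $s_X$, is independent of the initial distribution whenever $\mathbf{q}$ is completely mixed.

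The main obstacle is the degenerate vertex $\mathbf{p}=(1,1,0,0)$ (\emph{Repeat}), which Theorem~\ref{thm:monotone} explicitly excludes because the induced Markov chain splits into two closed communicating classes $\{cc,cd\}$ and $\{dc,dd\}$ and therefore lacks a unique stationary distribution. If the coordinate-rounding procedure tries to land on this point, I cannot invoke monotonicity there. The resolution, as observed in Section~\ref{sec:ergodicOneMem}, is that X's long-run behaviour under \emph{Repeat} collapses to either \emph{Always Cooperate}, i.e.\ $\mathbf{p}=(1,1,1,1)$, or \emph{Always Defect}, i.e.\ $\mathbf{p}=(0,0,0,0)$, depending on $p_0$; both of these are already pure vertices of the hypercube and lie in the enumeration, so the score achievable by \emph{Repeat} is matched by one of them and no best response is missed. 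Thus the 16-way (or, equivalently, 15-way modulo the degenerate vertex) enumeration certifies the existence of a pure one-memory best response.
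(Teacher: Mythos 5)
Your proposal is correct and follows essentially the same route as the paper, which likewise derives the result directly from Theorem~\ref{thm:monotone}: coordinate-wise monotonicity of $s_X$ pushes the optimum to a vertex of $[0,1]^4$, so enumerating the finitely many pure strategies suffices. Your explicit handling of the excluded \emph{Repeat} vertex $(1,1,0,0)$ is in fact slightly more careful than the paper's, which simply drops $F_{12}$ from the enumeration without comment.
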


     \cite{press2012iterated} did some experiments to show that evolutionary players who evolve to get better payoff will finally reach the optimal reward but they didn't prove it analytically. Actually this property follows from Theorem~\ref{thm:monotone}. No matter where a strategy begins, it will finally evolve to the best response. In other words, there is no local maxima.

     We are now able to compute all extreme values of $s_X$ by letting $p_i = 0$ or $p_i = 1$. The result is denoted as $F_k$, where
     $ k = \sum_{i=1}^{4} p_i * 2 ^ {4-i} $. For instance, when $\mathbf{p}=(1,1,1,1)$,
     $$ F_{15} = \frac{-R*q_3 + S*q_1 - S}{q_1 - q_3 - 1} $$
     Some other $F_k$ has a much more complex expression. We'll not show them due to limited space but readers can refer to our source code for details.
     Notice that $F_{12}$ is not considered as it corresponds to \emph{Repeat} strategy, $\mathbf{p}=(1,1,0,0)$.
     Some formulas turn out to have the same value, that is,
        $$ F_0 = F_4 = F_8,\quad F_{13} = F_{14} = F_{15} $$
     The corresponding strategies always receive the same payoff whatever the co-player's strategy is, we call such strategies equivalent.
    \begin{myTheorem}[Equivalent]
        When playing agaisnt completely mixed strategies, the following strategies are equivalent.
    \begin{align}
      (0,0,0,0) \equiv (0,1,0,0) \equiv (1,0,0,0) \nonumber \\
      (1,1,0,1) \equiv (1,1,1,0) \equiv (1,1,1,1) \nonumber
    \end{align}
    \end{myTheorem}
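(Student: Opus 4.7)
The plan is to derive the equivalences from the Markov chain induced by each $(\mathbf{p},\mathbf{q})$ pair, rather than by a brute algebraic expansion of the determinant formula. By Theorem~\ref{thm:DNot0}, every pairing of a listed $\mathbf{p}$ with a completely mixed $\mathbf{q}$ yields a convergent chain with a unique stationary distribution $\mathbf{v}$, and by (\ref{eq:sXsY}) the payoff $s_X$ depends on $\mathbf{p},\mathbf{q}$ only through $\mathbf{v}$ and the payoff vector $\mathbf{S_X}$. It therefore suffices to show, within each class, that $\mathbf{v}$ coincides across the three choices of $\mathbf{p}$.

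For the first class, the defining feature is $p_3=p_4=0$: once X has played $D$, X keeps playing $D$. Hence $\{dc,dd\}$ is an absorbing set of the chain, and the rows of $\mathbf{M}$ governing its internal transitions depend only on $q_2,q_4$ (equivalently the coordinates of $\mathbf{q}$ corresponding to $dc,dd$ via Table~\ref{tab:outcomeStrategy})---not on $p_1$ or $p_2$. I would then check that $\{cc,cd\}$ is transient for each of the three strategies: trivially for $(0,0,0,0)$ since X never cooperates; for $(0,1,0,0)$ because $p_1=0$ sends $cc$ straight into $\{dc,dd\}$, after which $cd$ cannot be re-entered; for $(1,0,0,0)$ because $q_1<1$ eventually leaks $cc$ into $cd$, and then $p_2=0$ forces $cd$ into $\{dc,dd\}$. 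Since the stationary distribution is supported on the absorbing set and the dynamics there are identical across the three strategies, $\mathbf{v}$, and hence $s_X$, agree.

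The second class is handled symmetrically: every strategy has $p_1=p_2=1$, so X never leaves a cooperating row; $\{cc,cd\}$ becomes the absorbing set with internal dynamics determined by $q_1,q_3$ only, while $\{dc,dd\}$ is transient by a parallel three-way case split.

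The main obstacle is writing the transience argument uniformly, since the three sub-cases in each class rely on slightly different reasons (absence of cooperation, a zero entry of $\mathbf{p}$, and completeness of $\mathbf{q}$, respectively). A purely symbolic alternative is to substitute each $\mathbf{p}$ into (\ref{eq:Dpqf}) and let \emph{SymPy} verify that the rational expressions for $s_X$ simplify to the same closed form---this is essentially what produced the identities $F_0=F_4=F_8$ and $F_{13}=F_{14}=F_{15}$ cited just before the theorem---but that route hides the structural reason for the coincidence.
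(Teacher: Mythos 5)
Your proposal is correct, but it proves the theorem by a genuinely different route than the paper. The paper treats this as a byproduct of its symbolic pipeline: it substitutes each pure $\mathbf{p}$ into the determinant formula (\ref{eq:Dpqf})--(\ref{eq:sXsY}), has \emph{SymPy} produce the closed forms $F_k$, and simply observes that $F_0=F_4=F_8$ and $F_{13}=F_{14}=F_{15}$ as rational expressions in $\mathbf{q}$ and $(R,S,T,P)$ --- there is no structural explanation offered. Your absorbing-set argument supplies exactly that explanation: since $p_3=p_4=0$ makes $\{dc,dd\}$ closed with internal dynamics governed only by $q_2,q_4$, and $p_1=p_2=1$ makes $\{cc,cd\}$ closed with dynamics governed only by $q_1,q_3$, the unique stationary distribution (guaranteed by Theorem~\ref{thm:DNot0}) concentrates on the closed set and is insensitive to the remaining coordinates of $\mathbf{p}$, so $s_X=\mathbf{S_X}\cdot\mathbf{v}$ coincides within each class. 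This buys generality and insight (it makes clear \emph{why} exactly these triples collapse, and would extend to analogous memory-$k$ situations), at the cost of the case analysis for transience; the paper's computation buys certainty with zero insight. One small presentational gap: in the $(0,1,0,0)$ sub-case, ``$p_1=0$ sends $cc$ into $\{dc,dd\}$'' shows $cc$ is transient but not yet $cd$ --- you also need $q_3>0$ so that the chain escapes $cd$ to $cc$ in finite expected time, the same ``leak'' argument you already use for $(1,0,0,0)$; state it explicitly and the proof is complete.
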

    Duplicates will not be considered in the rest of this paper, and the set of distinct expressions is defined as,
    \begin{equation}\label{eq:distinctset}
        \mathcal{F} = \{F_0, F_1, F_2, F_3, F_5, F_6, F_7, F_9, F_{10}, F_{11}, F_{15} \}
    \end{equation}

    \subsection{Theorem Discovery}
    With concrete values of (R,S,T,P), when strategy $\mathbf{q}$ is given, we can simply compute all values in $\mathcal{F}$ so that the strategy corresponding to the largest value is a best response. Our experiments show that concrete values of (R,S,T,P) really matter.
    When the concrete setting of the IPD varies, the best responses to some strategies also change. But some general theorems are discovered with SMT solver Z3.

    The knowledge base is defined according to the constraints of the IPD and mixed one-memory strategies. In this section, $\supset$ means ``logical imply'' , $\equiv$ means ``logical equivalent'' and $\lnot$ means ``logical not''. Theorems are proved by refutation.
    \begin{myDefinition}[Knowledge Base]
    \begin{align}\label{eq:KB}
        \mathcal{KB} \equiv \quad & \big ( \bigwedge_{i=1,2,3,4} 0 \leq p_i \leq 1 \big ) \land  \big ( \bigwedge_{i=1,2,3,4} 0< q_i < 1 \big ) \nonumber \\
        & \land (T>R>P>S) \land (2R > T+S)
    \end{align}
    \end{myDefinition}

    If a strategy is less cooperative under a better situation, there is no reason to cooperate with him.
    As $T>R>P>S$ is the setting of the prisoner's dilemma, column player's preference is  $CD \succ CC \succ DD \succ DC$. If his strategy $\mathbf{q}$ satisfies $q_3 \leq q_1 \leq q_4 \leq q_2$, then there is no chance of reciprocity. See Table~\ref{tab:outcomeStrategy} for this correspondence.

    \begin{myTheorem}\label{thm:ungrateful}
        If strategy $\mathbf{q}=\{q_1,q_2,q_3,q_4\}$ satisfies $q_3 \leq q_1 \leq q_4 \leq q_2$, then \emph{Always Defect} is its best response .
    \end{myTheorem}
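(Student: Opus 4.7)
The plan is to combine the two ingredients already assembled in this section. By the Monotonicity Theorem a pure one-memory strategy is always among the best responses to a completely mixed $\mathbf{q}$, and by Eq.~(\ref{eq:distinctset}) the candidates reduce to the eleven payoff formulas in $\mathcal{F}$. Since \emph{Always Defect} corresponds to $\mathbf{p}=(0,0,0,0)$ and hence to $F_0$, the theorem is equivalent to the claim that, under $\mathcal{KB}$ augmented with $q_3 \leq q_1 \leq q_4 \leq q_2$, we have $F_0 \geq F_k$ for every $k \in \{1,2,3,5,6,7,9,10,11,15\}$.

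First I would compute a closed-form expression for each $F_k$ symbolically, exactly as was done for $F_{15}$, using SymPy and cross-checking with MATLAB. Each $F_k$ is a rational function of $q_1,\ldots,q_4,R,S,T,P$ whose denominator is $D(\mathbf{p},\mathbf{q},\mathbf{1})$ evaluated at the corresponding pure $\mathbf{p}$, and by Theorem~\ref{thm:DNot0} that denominator is strictly negative. Hence each inequality $F_0 \geq F_k$ can be cleared of denominators (with the appropriate sign flip) into a purely polynomial inequality over the variables in $\mathcal{KB}$.

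Next, for each of the ten values of $k$ I would hand Z3 the refutation query
$$
\mathcal{KB} \;\land\; (q_3 \leq q_1) \land (q_1 \leq q_4) \land (q_4 \leq q_2) \;\land\; \lnot(F_0 \geq F_k)
$$
and verify that it is unsatisfiable. This mirrors the methodology already used in Theorem~\ref{thm:DNot0}, so no new tooling is needed; the whole theorem reduces to ten independent SMT calls.

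The main obstacle I anticipate is scalability: several of the $F_k$ have bulky numerators, and Z3's nonlinear real arithmetic support is incomplete, so a naive query may fail to terminate. The natural workaround, hinted at in the remark after Theorem~\ref{thm:DNot0}, is to pre-process each comparison by showing that $F_0 - F_k$ is monotonic in each $q_i$ (via the same derivative argument as in Theorem~\ref{thm:monotone}) and then verifying the sign only at the corner points of the region carved out by $q_3 \leq q_1 \leq q_4 \leq q_2$. This converts each intractable nonlinear query into a handful of much simpler ones, which Z3 should dispose of immediately, yielding the theorem.
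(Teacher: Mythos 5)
Your proposal is essentially the paper's own proof: the paper likewise reduces the claim via Theorem~\ref{thm:monotone} to showing $F_0$ dominates every other element of $\mathcal{F}$ under $\mathcal{KB} \land (q_3 \leq q_1 \leq q_4 \leq q_2)$, negates the conjunction into a disjunction of per-$F'$ clauses (Eq.~(\ref{eq:bestLEQ4})), and discharges each with a Z3 unsatisfiability check, using the same denominator-sign and clause-by-clause simplifications you anticipate. The only cosmetic difference is that the paper establishes the strict inequality $F_0 > F'$ rather than $F_0 \geq F'$.
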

    \begin{proof}
        First, we give a formal representation of this theorem. $\forall \mathbf{p}, \mathbf{q}, R,S,T,P$,
        \begin{equation}\label{eq:bestLEQ}
            \mathcal{KB} \land (q_3 \leq q_1 \leq q_4 \leq q_2) \supset \bigwedge_{F' \in \mathcal{F}\backslash \{ F_{0} \} } F_{0} > F'
        \end{equation}
        Eq. (\ref{eq:bestLEQ}) is equivalent to, $\forall \mathbf{p}, \mathbf{q}, R,S,T,P$,
        \begin{equation}\label{eq:bestLEQ2}
         \lnot \big ( \mathcal{KB} \land (q_3 \leq q_1 \leq q_4 \leq q_2) \big ) \lor \bigwedge_{F' \in \mathcal{F}\backslash \{ F_{0} \} } F_{0} > F'
        \end{equation}
        This theorem can be proved by conclusion refutation, and the negation of Eq. (\ref{eq:bestLEQ2}) is, $\exists \mathbf{p},\mathbf{q}, R,S,T,P$,
        \begin{equation}\label{eq:bestLEQ3}
         \big ( \mathcal{KB} \land (q_3 \leq q_1 \leq q_4 \leq q_2) \big ) \land \bigvee_{F' \in \mathcal{F}\backslash \{ F_{0} \} } F_{0} \leq F'
        \end{equation}
        which is equivalent to,
        \begin{equation}\label{eq:bestLEQ4}
         \bigvee_{F' \in \mathcal{F}\backslash \{ F_{0} \}} \big ( \mathcal{KB} \land (q_3 \leq q_1 \leq q_4 \leq q_2)  \land  F_{0} \leq F' \big )
        \end{equation}
         Z3 returns ``unsatisfiable'' to equation (\ref{eq:bestLEQ4}), implying that its negation(i.e. Theorem~\ref{thm:ungrateful}) is always true.
    \end{proof}

    A natural corollary of Theorem \ref{thm:ungrateful} is,  if $q_1 = q_2 = q_3 = q_4$, then \emph{Always Defect} is a best response.
    One famous example of such strategy is \emph{Random}, i.e. $\mathbf{q}=(0.5,0.5,0.5,0.5)$.
    This corollary illustrates that if some player is indifferent to the outcome of previous round, there is no reason to cooperate with him.
    More discussion on this property and backward inducation can be found in Section~\ref{sec:backinduction}.

    Some other interesting theorems are discovered on the neighbor of \emph{Mischief} strategy defined by~\cite{press2012iterated}. Such strategy is also called equalizer.
    We only show the theorems together with their formal representations as the proofs are very similar.

    \begin{myDefinition}[MisChief]\label{def:mischief}
        A MisChief strategy is an one-memory strategy $\mathbf{q}=(q_1,\overline{q_2},\overline{q_3},q_4)$ defined as,
        \begin{align}\label{eq:mistort}
         \mathcal{MC} \equiv \quad  &   \overline{q_2} = \frac{{q_1}(T-P)-(1+{q_4})(T-R)}{R-P}  \quad \land \nonumber \\
                        &   \overline{q_3} = \frac{(1-{q_1})(P-S)+{q_4}(R-S)}{R-P}
        \end{align}
    \end{myDefinition}

    \begin{myTheorem}[Mischief]\label{thm:mischief}
        While playing with MisChief strategy $\mathbf{q}=(q_1,q_2,q_3,q_4)$, every strategy of $\mathbf{p}$ receives the same average payoff. Formally, $\forall \mathbf{p}, \mathbf{q}, R,S,T,P$
        \begin{equation}\label{thm:mischief1}
            \mathcal{KB} \land \mathcal{MC} \land (q_2 = \overline{q_2}) \land (q_3 = \overline{q_3}) \supset \bigwedge_{F_i, F_j \in \mathcal{F}} F_i = F_j \nonumber
        \end{equation}
    \end{myTheorem}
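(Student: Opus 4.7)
The plan is to follow the SMT-refutation template already used in the proof of Theorem~\ref{thm:ungrateful}. The conclusion $\bigwedge_{F_i, F_j \in \mathcal{F}} F_i = F_j$ is a finite conjunction of equalities among the closed-form expressions $F_k$ obtained by enumerating pure $\mathbf{p}$. Negating this conclusion and distributing over the conjunction, as was done to reach equation~(\ref{eq:bestLEQ4}), produces a finite disjunction of cases, each of the form
\begin{equation*}
\mathcal{KB} \;\land\; \mathcal{MC} \;\land\; (q_2 = \overline{q_2}) \;\land\; (q_3 = \overline{q_3}) \;\land\; (F_i \neq F_j).
\end{equation*}
After substituting the pre-computed rational expressions for $F_i$ and $F_j$, feeding each case to Z3 and getting ``unsatisfiable'' on every one completes the proof.

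The reason this refutation succeeds is the Press--Dyson zero-determinant identity, which gives an analytical proof if one prefers. Since $s_X = D(\mathbf{p},\mathbf{q},\mathbf{S_X})/D(\mathbf{p},\mathbf{q},\mathbf{1})$ and the denominator is nonzero by Theorem~\ref{thm:DNot0}, showing that $s_X$ equals some constant $c = c(q_1, q_4, R, S, T, P)$ for every $\mathbf{p}$ is equivalent, by column-linearity of $D$ in its fourth column, to the identity $D(\mathbf{p}, \mathbf{q}, \mathbf{S_X} - c \mathbf{1}) \equiv 0$ in $\mathbf{p}$. The equalizer conditions $q_2 = \overline{q_2}$ and $q_3 = \overline{q_3}$ of Definition~\ref{def:mischief} are precisely the algebraic constraints that force the fourth column of $D$ to be a linear combination of the preceding three columns, with the constant $c$ read off from $q_1, q_4, R, S, T, P$. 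Once the determinant vanishes identically in $\mathbf{p}$, every pure $\mathbf{p}$ gives the same value $c$, so $F_i = F_j$ for all pairs.

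The main obstacle is purely symbolic: the $F_k$ are sizeable rational functions, and verifying even one equality $F_i = F_j$ under the mischief substitution by hand is infeasible. In practice I would delegate the pairwise equality checks to Z3 and the underlying determinant simplification to SymPy (both already used earlier in the paper), and invoke Theorem~\ref{thm:DNot0} to legitimately clear denominators when assembling the SMT queries.
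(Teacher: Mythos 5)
Your proposal follows exactly the refutation-by-Z3 template that the paper itself uses: the paper gives no separate proof for Theorem~\ref{thm:mischief}, stating only that the proof is ``very similar'' to that of Theorem~\ref{thm:ungrateful}, i.e.\ negate the conjunction of pairwise equalities $F_i = F_j$ over $\mathcal{F}$, split into a finite disjunction of cases, and obtain \emph{unsat} from Z3 on each. Your additional Press--Dyson zero-determinant remark (the equalizer conditions force $D(\mathbf{p},\mathbf{q},\mathbf{S_X}-c\mathbf{1})\equiv 0$ in $\mathbf{p}$) is a correct analytical justification that goes beyond what the paper records, but the proof route you actually propose is the paper's own.
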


    \begin{myTheorem}[MisTort]\label{thm:mistort}
        While playing with strategy $\mathbf{q}=(q_1,q_2,q_3,q_4)$ s.t. $q_2 < \overline{q_2}$, $q_3 = \overline{q_3}$, Always Cooperate is a best response. Formally, $\forall \mathbf{p}, \mathbf{q}, R,S,T,P$,
        \begin{equation}\label{eq:mistort}
         \mathcal{KB} \land \mathcal{MC} \land (q_2 < \overline{q_2}) \land (q_3 = \overline{q_3})   \supset \bigwedge_{F' \in \mathcal{F}\backslash \{F_{15} \} } F_{15} > F' \nonumber
        \end{equation}
    \end{myTheorem}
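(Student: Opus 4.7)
The plan is to recycle the SMT-refutation template from the proof of Theorem~\ref{thm:ungrateful}. By Theorem~\ref{thm:monotone}, $s_X$ is coordinate-wise monotone in each $p_i$, so its maximum over $[0,1]^4$ is attained at a vertex of the unit cube; hence it suffices to compare $F_{15}$ (the value of $s_X$ at $\mathbf{p}=(1,1,1,1)$, i.e.\ \emph{Always Cooperate}) against each $F' \in \mathcal{F}\setminus\{F_{15}\}$, where $\mathcal{F}$ is the reduced list of pure-strategy values from (\ref{eq:distinctset}). The strict inequality $F_{15} > F'$ on this finite set is exactly the claim.

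I would then translate the conclusion into its disjunctive negation exactly as in~(\ref{eq:bestLEQ3})--(\ref{eq:bestLEQ4}). For each $F' \in \mathcal{F}\setminus\{F_{15}\}$ this produces one Z3 query
\[
\mathcal{KB} \,\land\, \mathcal{MC} \,\land\, (q_2 < \overline{q_2}) \,\land\, (q_3 = \overline{q_3}) \,\land\, (F_{15} \leq F'),
\]
and if Z3 reports \emph{unsatisfiable} on every such query, the theorem follows. Before handing off to the solver, I would use \emph{SymPy} to substitute $q_3 = \overline{q_3}$ into the $F_k$ expressions already obtained as symbolic determinants, leaving each side as a rational function in $q_1,q_2,q_4,R,S,T,P$.

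The main obstacle will be nonlinear real arithmetic: after substitution each inequality $F_{15} \leq F'$ is a rational comparison of nontrivial polynomial degree, and Z3's complete decision procedure can be slow or time out on such instances. Two mitigations suggest themselves. First, clear denominators by multiplying through $D(\mathbf{p},\mathbf{q},\mathbf{1})$, whose sign is pinned to be negative by Theorem~\ref{thm:DNot0}; this converts each query into a polynomial inequality with a known orientation. Second, if Z3 still stalls, I would emulate the accelerated variant of Theorem~\ref{thm:DNot0}: verify monotonicity of $F_{15}-F'$ in the remaining $q$-coordinates via a derivative test, collapse those coordinates to their endpoints in $\{0,1\}$, and discharge the resulting finite list of low-degree polynomial inequalities in $q_2$ (subject to $q_2 < \overline{q_2}$) and $R,S,T,P$ individually.
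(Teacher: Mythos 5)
Your proposal matches the paper's approach: the paper explicitly states that the proofs of the MisChief/MisTort theorems follow the same refutation template as Theorem~\ref{thm:ungrateful}, i.e.\ negate the conclusion clause by clause and have Z3 return ``unsatisfiable'' for each query $\mathcal{KB} \land \mathcal{MC} \land (q_2 < \overline{q_2}) \land (q_3 = \overline{q_3}) \land (F_{15} \leq F')$, with the reduction to the finite set $\mathcal{F}$ justified by Theorem~\ref{thm:monotone} exactly as you argue. Your mitigations for nonlinear arithmetic (clearing denominators using the sign from Theorem~\ref{thm:DNot0}, breaking into per-clause queries) are likewise the tactics the paper itself describes in its discussion of Z3's limitations.
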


    \begin{myTheorem}
        While playing with strategy $\mathbf{q}=(q_1,q_2,q_3,q_4)$ s.t. $q_2 = \overline{q_2}$,$q_3 > \overline{q_3}$, Always Defect is a best response.
        Formally, $\forall \mathbf{p}, \mathbf{q}, R,S,T,P$,
        \begin{equation}\label{eq:thmmistort}
         \mathcal{KB} \land \mathcal{MC} \land (q_2 = \overline{q_2}) \land (q_3 > \overline{q_3})   \supset \bigwedge_{F' \in \mathcal{F}\backslash \{F_{0} \} } F_{0} > F' \nonumber
        \end{equation}
    \end{myTheorem}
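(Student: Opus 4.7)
The plan is to follow the same refutation template used for Theorem~\ref{thm:ungrateful} and Theorem~\ref{thm:mistort}, which the author explicitly flagged as the shared pattern for this group of results. First I would rewrite the implication as an existential negation: if the theorem fails, then there exist $\mathbf{p}, \mathbf{q}, R, S, T, P$ and some $F' \in \mathcal{F} \setminus \{F_0\}$ such that
$$ \mathcal{KB} \land \mathcal{MC} \land (q_2 = \overline{q_2}) \land (q_3 > \overline{q_3}) \land (F_0 \leq F'). $$
Distributing the conjunction over the disjunction yields ten independent satisfiability queries, one for each candidate $F' \in \{F_1, F_2, F_3, F_5, F_6, F_7, F_9, F_{10}, F_{11}, F_{15}\}$.

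Each $F_k$ is a closed-form rational function in the eight strategy parameters and the four payoff constants, already computed symbolically with SymPy in the previous sections. To deliver the comparison $F_0 \leq F'$ to Z3 in a form its nonlinear real-arithmetic solver handles well, I would clear denominators using the common denominator $N = D(\mathbf{p},\mathbf{q},\mathbf{1})$: by Theorem~\ref{thm:DNot0} we have $N < 0$, so $F_0 \leq F'$ becomes the polynomial inequality $M_0 \geq M'$, where $M_0, M'$ are the corresponding numerators $D(\mathbf{p},\mathbf{q},\mathbf{S_X})$. After substituting the equalizer expressions for $\overline{q_2}$ and $\overline{q_3}$ from Definition~\ref{def:mischief}, each query reduces to a conjunction of polynomial (in)equalities over the reals. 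The expectation is that Z3 returns \emph{unsatisfiable} on each of the ten queries; taking the disjunction then refutes the negation of the theorem.

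The main obstacle will be computational rather than conceptual. Several of the numerators $M_k$ for the more intricate deterministic strategies (for example $F_5, F_9, F_{10}$) expand into many monomials once the equalizer constraints are substituted, and Z3 is known to be slow or even incomplete on high-degree polynomial systems in ten free variables. If a particular query stalls, I would apply the acceleration trick already noted after Theorem~\ref{thm:DNot0} and in the discussion of Theorem~\ref{thm:monotone}: since $s_X$ is monotone in each $p_i$, it suffices to verify the inequality at the vertices $p_i \in \{0,1\}$, collapsing the $p_i$ out of the query entirely and leaving a much smaller check in $q_1, q_4, R, S, T, P$ only. A secondary sanity check worth running first is the limiting case $q_3 \to \overline{q_3}$, in which Theorem~\ref{thm:mischief} forces $F_0 = F'$ for every $F'$; the present theorem then amounts to showing that the derivative of $F_0 - F'$ with respect to $q_3$ is strictly positive on the equalizer manifold, which is a lower-dimensional and more tractable verification in case the direct ten queries prove too expensive.
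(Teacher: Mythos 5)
Your proposal matches the paper's own proof, which is exactly the refutation template of Theorem~\ref{thm:ungrateful}: negate the implication, split into one Z3 satisfiability query per $F' \in \mathcal{F}\setminus\{F_0\}$ under $\mathcal{KB} \land \mathcal{MC} \land (q_2 = \overline{q_2}) \land (q_3 > \overline{q_3})$, and conclude from ``unsat'' on each. The only small slip is that the $F_k$ already have the $p_i$ instantiated to $\{0,1\}$ (they are the extreme values of $s_X$), so the queries contain no free $p_i$ to collapse and each $F_k$ carries its own negative denominator $D(\mathbf{p}_k,\mathbf{q},\mathbf{1})$ rather than a common one; neither point affects the correctness of your argument.
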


    \begin{figure}[t]
      \caption{MisTort Strategy $\mathbf{q}=(0.9, 0.5, 0.2, 0.1)$}\label{fig:mistort}
      \centering
      \includegraphics[width=\linewidth]{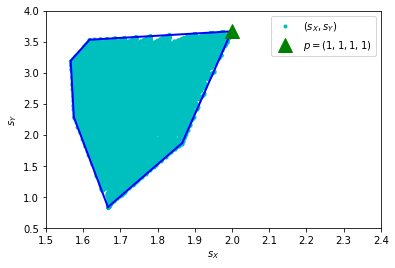}
    \end{figure}

    We call the strategy in Theorem~\ref{thm:mistort} \emph{MisTort} because it's on the neighbor of \emph{MisChief} strategy and shares the property of extortionate strategies in~\cite{press2012iterated}, to which the best response is \emph{Always Cooperate}. Due to similar property, readers may guess that one may be the a subclass of the other. However, there turns out to be no intersection between \emph{MisTort} and extortionate strategies.

    \begin{myTheorem}\label{thm:noIntersection}
        There is no intersection between extortionate strategy and MisTort strategy.
    \end{myTheorem}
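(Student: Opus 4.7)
The plan is to reduce the theorem to an SMT unsatisfiability check, in the same style as the preceding theorems of this section. First I would write down the algebraic definition of an extortionate strategy $\mathbf{q}$ following Press and Dyson~\cite{press2012iterated}: it is a two-parameter family, parametrized by an extortion factor $\chi > 1$ and a slack parameter $\phi > 0$, in which each $q_i$ is an explicit linear function of $\chi, \phi$ and the payoffs $R, S, T, P$, together with the range constraints on $\chi$ and $\phi$ that keep every $q_i$ in $[0,1]$. Call this conjunction $\mathcal{EX}$.

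Next, I would conjoin $\mathcal{EX}$ with the MisTort constraints $\mathcal{MC} \land (q_2 < \overline{q_2}) \land (q_3 = \overline{q_3})$ and with $\mathcal{KB}$, and hand the resulting system to Z3. The theorem asserts that this joint system is unsatisfiable, so an ``unsatisfiable'' verdict from the solver proves it. A useful sanity check before invoking Z3: the canonical Press--Dyson extortionate strategies sit on the boundary of the unit hypercube (typically $q_4 = 0$), while MisTort inherits the strict interior requirement $0 < q_i < 1$ from the completely-mixed assumption of Section~\ref{sec:ergodicOneMem}. So on the face of things the two sets already look disjoint, and the Z3 call is expected to confirm it quickly.

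The main obstacle I anticipate is the two additional existentially quantified parameters $\chi$ and $\phi$ inside $\mathcal{EX}$: nonlinear real arithmetic with free parameters can be slow or out of scope. My fallback is to eliminate $\chi$ and $\phi$ symbolically first --- for example, solve the extortionate equations to express $q_2$ and $q_3$ as rational functions of $(q_1, q_4, R, S, T, P)$, substitute into the MisTort identity $q_3 = \overline{q_3}$ to obtain one algebraic constraint, and then ask Z3 only whether that constraint is compatible with $q_2 < \overline{q_2}$ and $\mathcal{KB}$. If Z3 still returns ``unsatisfiable'' on this reduced system, the theorem follows; if not, the last resort is to show directly that forcing $q_4$ to lie strictly in $(0,1)$ (as $\mathcal{KB}$ requires) is incompatible with the Press--Dyson parametric form, which would rule out any intersection by itself.
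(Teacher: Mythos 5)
Your proposal matches the paper's proof: the paper likewise conjoins the Press--Dyson parametrization $\mathcal{EX}$ (with $\phi$ and $\chi$ left as free variables) with $\mathcal{MC}$, $q_2 < \overline{q_2}$, $q_3 = \overline{q_3}$, and the game constraints, and reports that Z3 returns ``unsat'' on the resulting system. The only minor difference is that the paper uses the relaxed bounds $0 \leq q_i \leq 1$ rather than the strict interior ones, so the result does not reduce to the trivial ``$q_4 = 0$ versus $0 < q_4$'' disjointness you note as a sanity check.
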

    \begin{proof}
        Extortionate strategy($\mathcal{EX}$) is defined in~\cite{press2012iterated},
            \begin{align}
            \mathcal{EX} \equiv  \quad & q_1 = 1-\phi(\chi-1)\frac{R-P}{P-S}  \nonumber \\
            \land \quad & q_2 = 1-\phi(1+\chi\frac{T-P}{P-S}) \nonumber \\
            \land \quad & q_3 = \phi(\chi+\frac{T-P}{P-S}) \quad \land  \quad q_4 = 0 \nonumber \\
            \land \quad & 0 < \phi \leq \frac{(P-S)}{(P-S)+\chi (T-P)} \quad \land \quad \chi > 1 \nonumber
            \end{align}
        The negation of Theorem~\ref{thm:noIntersection} is, $\exists \mathbf{q}$,
        \begin{align}
           & (\bigwedge_{i=1,2,3,4} 0 \leq q_i \leq 1) \land \mathcal{EX}\land \mathcal{MC} \land (q_2 < \overline{q_2}) \nonumber \\
           & \land (q_3 = \overline{q_3}) \land (T>R>P>S) \land (2R > T+S) \label{eq:isIntersection}
        \end{align}
        Z3 returns ``unsat'', meaning  Theorem~\ref{thm:noIntersection} is always true.
    \end{proof}

    One specific instance of \emph{MisTort} strategy can be calculated by letting $q_1 = 0.9, q_4=0.1$ and $(R,S,T,P) = (3,0,5,1)$. From Eq. (\ref{eq:mistort}) we can calculate out $q_2<0.7, q_3=0.2$. A possible \emph{MisTort} strategy is, $\mathbf{q} = ( 0.9, 0.5, 0.2, 0.1 ) $.
    We enumerate a number of strategy $\mathbf{p}$ to play against this $\mathbf{q}=(0.9, 0.5, 0.2, 0.1)$, and the average payoff of both players is shown in Figure~\ref{fig:mistort}. All pairs of payoffs $(s_X, s_Y)$ form a compact convex hull, and the case that $\mathbf{p} = (1,1,1,1)$ is receives highest payoff.
    The figure implies that \emph{Always Cooperate} is a best response to $\mathbf{q}$, leading to a payoff where $s_X = 2.0, s_Y = 3.67$.

\section{Repeated Games and MDP}
\subsection{Repeated Game and K-memory Strategy}
    A finite, 2-person normal form game is a tuple $(N, A, u)$, where
    \begin{itemize}
      \item $N=\{1,2\}$ is the set of two players.
      \item $A = A_1 \times A_2$, where $A_i$ is a finite set of actions available to player $i$. Each vector $a = (a_1,a_2) \in A$ is called an action profile (or outcome).
      \item $u = (u_1,u_2)$, where $u_i : A \mapsto \mathbb{R}$ is a real-valued utility (or payoff) function for player i.
    \end{itemize}

    This stage game is played for infinite rounds. A player's payoff is defined as the average payoff of the stage game in the limit~\cite{shoham2008multiagent}.
    Similar to Eq.~(\ref{eq:avarageScore}), given an infinite sequence of payoffs $u_i^{(1)}$, $u_i^{(2)}$,... for player i, the average payoff of $i$ is
    $ \lim_{k\to\infty} {\sum_{j=1}^{k}u_i^{(j)}} \big / {k} $.

    A k-memory mixed strategy of player $i$ is a function that maps k action profiles to a distribution on actions,
    \begin{equation}\label{eq:kmemorystrategy}
        p_i: A^k  \mapsto \Delta(A_i)
    \end{equation}
    Given previous k actions profiles $s \in A^k$, the probability of taking action $a_i \in A_i$ is written as $p_i(s,a_i)$.
    In the start of the game, there aren't k action profiles, in this case we fill empty action profile with action $c$. Our conclusions don't rely on the outcomes of the first k rounds,
    see Thm~\ref{thm:communicating}.

\subsection{Markov Decision Process}\label{sec:mdp}
    Now we compute the best response to a k-memory strategy. The first question is how much memory is required to be a best response. According to~\cite{press2012iterated}, shortest memory sets the rule of the game, so that it needs to be at most k-memory. On the other hand, any strategy with less than k-memory can be represented by a k-memory strategy. Then we know k-memory strategy should be exactly enough. This problem is also discussed in~\cite{chen2017k}.

    Both players take k-memory strategies. Assume player 2 plays a completely mixed strategy $p_2$, namely $\forall s\in S, a_2 \in A_2, p_2(s,a_2) > 0 $, then from player 1's point of view, his action and payoff can be modeled as the following MDP.
    \begin{itemize}
      \item Actions: same as $A_1$ in the repeated game.
      \item States: $S = A^k $, the set of vectors of $k$ action profiles, indexed by 1,2,...,k, from farthest to most recent. Any $s_{1:k} \in S$ refers to k continuous action profiles. When there comes a new action profile $(a_1, a_2)$, it goes to a new state $s'=s_{2:k}+(a_1,a_2)$, which drops the farthest action profile $s_1$ and appends $(a_1,a_2)$.
      \item Transition Function. When player 1 takes action $a_1$, the states transits from state $s$ to $s'=s_{2:k}+(a_1,a_2)$ with probability
      $ T(s,a_1,s')= p_2(s,a_2) $. At every state $s$, for any action $a_1$, $\sum_{s'}^{}T(s,a_1,s') = 1$.
      \item Reward $R(s, a_1) = u_1(s_k), \forall a_1 \in A_1 $, meaning player 1's utility of the most recent action profile.
    \end{itemize}

    A mixed policy $\pi$ assigns a probability distribution of actions to each state. Formally, $\pi: S  \mapsto \Delta( A_1)$. For each $s \in S$, the probability of taking action $a_1 \in A_1$ is written as $\pi(s,a_1) $.
    A completely mixed policy is $\pi(s,a_1) > 0 , \forall s\in S, a_1 \in A_1$. $\Pi^+$ is the set of all completely mixed policies, which is a subset of all policies $\Pi$.
    A pure policy is that, for each $s\in S$, there is exactly one $a_1 \in A_1$ satisfying $\pi(s,a_1)=1$.
    From the definition of state $S$ and strategy in Eq.~(\ref{eq:kmemorystrategy}), we can easily see that every policy $\pi$ corresponds to a strategy $p_1$.

    Player 1 walks among states for infinite rounds, starting from any state $s\in S$. At each round, it takes an action according to current state and its policy $\pi$, then it goes into a new state. This agent receives a sequence of rewards, namely $R^1,R^2, ..., R^j, ...$, and the final score $ \rho^{\pi}(s)$ is calculated by the expectation of average, formally,
    $$ \rho^{\pi}(s)   = \lim_{k \to \infty } E^{\pi}\big \{ \frac{\sum_{j=0}^{k} R^j }{k}  \big \} $$

    Our goal is to find an optimal policy $\pi^*$ that maximizes the expectation of average payoff, $ \forall \pi, \rho^{\pi^*}(s) \geq  \rho^{\pi}(s) $. We will show that the initial state $s$ doesn't matter, for our MDP.

\subsection{Solving Average-Payoff MDPs}
    There is a detailed explanation on average-payoff MDPs in textbook~\cite{puterman2014markov}.
    Our results in this part are mainly based on propositions and theorems in~\cite{filar1988communicating}, which are introduced as lemmas.
    To be specific, Lemma~\ref{lem:purepolicy} is from paragraph 2, section 1; Definition~\ref{def:communicating} is from Definition 1.1,
    and Lemma~\ref{lem:communicating} is from Theorem 2.1 in~\cite{filar1988communicating}.

    \begin{myLemma}\label{lem:purepolicy}
        While computing optimal policy, it is sufficient to consider pure policies.
    \end{myLemma}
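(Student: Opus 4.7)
The plan is to invoke the classical fact from average-reward MDP theory that, for any finite MDP, the optimal gain is attained by some stationary deterministic (pure) policy. Since the MDP constructed in Section~\ref{sec:mdp} has finite state space $S = A^k$ and finite action set $A_1$, this result applies directly, and the lemma is imported essentially verbatim from paragraph~2 of Section~1 of \cite{filar1988communicating}. My task is therefore to sketch why the classical argument goes through in our setting rather than to reprove it from scratch.

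First I would write down the average-reward Bellman optimality equations: one seeks a scalar $g$ and a bias vector $h : S \to \mathbb{R}$ satisfying
\[
  g + h(s) \;=\; \max_{a_1 \in A_1} \Bigl\{ R(s,a_1) + \sum_{s' \in S} T(s,a_1,s')\, h(s') \Bigr\}, \qquad s \in S.
\]
Any stationary policy $\pi^{*}$ that at each state $s$ picks an action attaining this maximum achieves gain $g$ and is optimal. Because the inner maximization ranges over the finite set $A_1$, a deterministic selector always exists. A randomized policy $\pi$ can match this gain only if, at every $s$, it puts positive probability solely on maximizing actions, in which case it is payoff-equivalent to a pure selector; otherwise it is strictly dominated. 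Hence the supremum of $\rho^{\pi}$ over $\pi \in \Pi$ is attained by a pure policy.

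The hard part is establishing existence of the pair $(g,h)$ solving the Bellman equation. In the general multichain setting one standard route is the linear-programming formulation of the average-reward MDP, whose dual polyhedron has deterministic stationary policies as its extreme points; another is policy iteration, which terminates in finitely many steps at a deterministic policy precisely because $S$ and $A_1$ are finite. Either argument is classical and appears in \cite{puterman2014markov,filar1988communicating}; since our lemma is simply quoted from the latter, citation is sufficient. One subtlety worth flagging is that for general finite MDPs the optimal gain may depend on the initial state, so ``pure optimal policy'' a priori means pure \emph{per recurrent class}; this subtlety disappears once the MDP is shown to be communicating in Theorem~\ref{thm:communicating}, at which point a single pure policy is globally optimal regardless of the starting state.
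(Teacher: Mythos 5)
Your proposal is correct and matches the paper's approach: the paper gives no proof of this lemma at all, importing it verbatim from paragraph~2, Section~1 of \cite{filar1988communicating} exactly as you do. Your additional sketch via the average-reward optimality equations is sound and goes beyond what the paper offers, and you correctly flag the one real subtlety (state-dependent gain in the multichain case), which is indeed resolved only later when the MDP is shown to be communicating.
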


    A policy $\pi$ induces a Markov chain on states $S$ with transition matrix $\mathbf{M}(\pi)$, whose entries $\mathbf{M}_{st}(\pi)$ denote the probability of transition from state s to state t when policy $\pi$ is followed. For $\tau$ being a nonnegative integer, $\mathbf{M}^{\tau}(\pi)$ denotes the $\tau$-th power of square matrix $\mathbf{M}(\pi)$, and $\mathbf{M}^{\tau}_{st}(\pi)$ denotes an entry in it.

    \begin{myDefinition}\label{def:communicating}
        An MDP is communicating if, for every pair of states $s,t \in S$, there exists a pure policy $\pi$ and an integer $\tau \geq 1$ such that $\mathbf{M}^{\tau}_{st}(\pi)$ is strictly positive.
    \end{myDefinition}

    \begin{myLemma}\label{lem:communicating}
        Let $\Pi^+$ be the set of completely mixed policies. The following two conditions are equivalent.
        \begin{itemize}
          \item An MDP is communicating.
          \item Every policy $\pi^+ \in \Pi^+$ induces an irreducible $\mathbf{M}(\pi^+)$.
        \end{itemize}
    \end{myLemma}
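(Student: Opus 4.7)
The plan is to prove the equivalence by unfolding matrix powers into concrete paths of actions and states, and then translating between pure-policy paths and completely-mixed-policy paths. The key structural observation is that one-step transition probabilities factor as $\mathbf{M}_{st}(\pi)=\sum_{a_1}\pi(s,a_1)\,T(s,a_1,t)$, so a positive matrix entry always corresponds to at least one action that realises the transition through the kernel $T$.

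For the direction ``communicating $\Rightarrow$ irreducibility of every $\mathbf{M}(\pi^+)$'', I would fix an arbitrary $\pi^+\in\Pi^+$ and a pair $s,t\in S$. The communicating hypothesis supplies a pure policy $\pi$ and an integer $\tau\geq 1$ with $\mathbf{M}^{\tau}_{st}(\pi)>0$, which I would unpack into a path $s=s_0,s_1,\dots,s_\tau=t$ satisfying $T(s_i,\pi(s_i),s_{i+1})>0$ at every step. Since $\pi^+$ places positive mass on every action, the estimate $\mathbf{M}_{s_is_{i+1}}(\pi^+)\geq \pi^+(s_i,\pi(s_i))\,T(s_i,\pi(s_i),s_{i+1})>0$ transfers the same path to $\pi^+$, yielding $\mathbf{M}^{\tau}_{st}(\pi^+)>0$. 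As $s,t$ were arbitrary, $\mathbf{M}(\pi^+)$ is irreducible.

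For the converse, I would fix one reference completely mixed policy $\pi_0^+\in\Pi^+$ (for instance uniform over $A_1$). By hypothesis $\mathbf{M}(\pi_0^+)$ is irreducible, so for each pair $s,t\in S$ there exists $\tau$ with $\mathbf{M}^{\tau}_{st}(\pi_0^+)>0$. Expanding this matrix power as a sum over intermediate states and actions produces a sequence $s=s_0,\dots,s_\tau=t$ together with actions $a_0,\dots,a_{\tau-1}$ such that every factor $\pi_0^+(s_i,a_i)\,T(s_i,a_i,s_{i+1})$ is positive; the usable piece is the kernel factor $T(s_i,a_i,s_{i+1})>0$. Defining a pure policy $\pi$ that selects $a_i$ at $s_i$, with any fixed default elsewhere, then makes the same path realisable deterministically, giving $\mathbf{M}^{\tau}_{st}(\pi)>0$ and hence the communicating property.

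The main obstacle I expect is repeated states on the path in the converse direction: if $s_i=s_j$ with $i\neq j$, the actions $a_i$ and $a_j$ extracted from the expansion may disagree, and no single pure policy can play both at that state. The standard remedy is to shorten the walk to a simple path by excising any cycle between two occurrences of the same state; this preserves both endpoints and the step-wise positivity of the remaining transitions, so a consistent pure policy on the trimmed path (of length at most $|S|-1$) is always available. With this trimming step built into the expansion, the two-direction argument above becomes fully rigorous.
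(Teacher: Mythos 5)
The paper does not actually prove this lemma: it is imported verbatim as Theorem 2.1 of the cited reference \cite{filar1988communicating} (the authors state this explicitly just before Lemma~\ref{lem:purepolicy}). So your argument is not ``the same as'' or ``different from'' the paper's proof --- it supplies one where the paper has none. On its own merits, your argument is correct and is the standard elementary one: the factorisation $\mathbf{M}_{st}(\pi)=\sum_{a_1}\pi(s,a_1)T(s,a_1,t)$ means the support of $\mathbf{M}(\pi^+)$ is the same for every completely mixed $\pi^+$ (namely the pairs $(s,t)$ joined by \emph{some} action), and both directions reduce to transferring a positive-probability path between a pure policy and a completely mixed one. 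Your forward direction is airtight. Your converse correctly identifies the one real obstacle --- conflicting actions at a repeated state --- and the cycle-excision fix is the right one; note in passing that your converse only uses irreducibility of a \emph{single} $\pi_0^+\in\Pi^+$, which proves a slightly stronger statement than the lemma asks for (harmless, and in fact unavoidable given the common-support observation).

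One loose end worth tightening: Definition~\ref{def:communicating} as stated in the paper demands $\tau\geq 1$ for \emph{every} pair $s,t$, including $s=t$, and there your trimming argument collapses the walk to length $0$. The fix is short: if $T(s,a,s)>0$ for some $a$ take $\tau=1$; otherwise pick any $t'$ with $T(s,a,t')>0$ (such $t'$ exists since $T(s,a,\cdot)$ is a distribution) and append a simple path from $t'$ back to $s$; its interior states are distinct from each other and from $s$, so a consistent pure policy exists for the closed walk. With that sentence added, the proof is complete.
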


    Since we have assumed that player 2 takes a completely mixed strategy $p_2$, for every $\pi^+ \in \Pi^+$, the induced matrix $\mathbf{M}(\pi^+)$ is irreducible.
    In other words, when both players take completely mixed strategies, there is no transient state in the corresponding Markov chain, because it's possible to append any action profile to any state.
    \begin{myTheorem}\label{thm:communicating}
        The MDP in section~\ref{sec:mdp} is communicating. Its optimal policy can be calculated by linear programming, which is independent of starting state.
        If $\pi^*$ is an optimal policy starting from state $s$, then it is also an optimal policy starting from any other state $s'$.
    \end{myTheorem}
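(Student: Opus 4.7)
The plan is to first establish the communicating property via Lemma~\ref{lem:communicating}, and then appeal to standard average-reward MDP theory for the linear-programming characterization and the initial-state independence.

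For the main step, by Lemma~\ref{lem:communicating} it suffices to show that every completely mixed policy $\pi^+ \in \Pi^+$ induces an irreducible Markov chain $\mathbf{M}(\pi^+)$. Fix an arbitrary $\pi^+ \in \Pi^+$. A single-step transition from $s = (s_1,\ldots,s_k)$ to $s' = (s_2,\ldots,s_k,(a_1,a_2))$ occurs with probability $\pi^+(s,a_1)\, p_2(s,a_2)$, which is strictly positive because both $\pi^+$ and $p_2$ are completely mixed. For any target state $t = (t_1,\ldots,t_k) \in A^k$, I can reach $t$ from any $s$ in exactly $k$ steps by producing the action profiles $t_1, t_2, \ldots, t_k$ in order, since after $k$ transitions the sliding window of length $k$ has been refilled entirely. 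The probability of this $k$-step path is a product of $k$ strictly positive numbers, so $\mathbf{M}^k_{st}(\pi^+) > 0$ for all $s,t$, and therefore $\mathbf{M}(\pi^+)$ is irreducible. By Lemma~\ref{lem:communicating}, the MDP is communicating.

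For the remaining claims, I would invoke the classical theory of communicating average-reward MDPs from~\cite{puterman2014markov,filar1988communicating}: the optimal gain $\rho^*$ is a single constant across all states, there exists a pure policy attaining $\rho^*$ from every initial state (consistent with Lemma~\ref{lem:purepolicy}), and such a policy can be computed by the standard LP whose variables are state-action occupation measures. Because $\rho^*$ is constant, if $\pi^*$ is optimal from some $s$ then $\rho^{\pi^*}(s) = \rho^*$; taking $\pi^*$ among the gain-optimal pure policies (which exist by the cited theory) yields a policy that attains $\rho^*$ from every other state $s'$ as well, giving the initial-state independence.

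The main obstacle is the irreducibility step, and it reduces to the simple structural observation that after $k$ transitions the $k$-length window of action profiles is completely determined by the agents' sampled actions, each of which is played with positive probability under completely mixed policies; once that is in hand, the rest is bookkeeping and quotation of standard results for communicating average-reward MDPs.
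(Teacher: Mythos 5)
Your proposal is correct and follows essentially the same route as the paper: the paper also argues that, since player 2 is completely mixed, every completely mixed policy induces an irreducible chain because ``it's possible to append any action profile to any state,'' then invokes Lemma~\ref{lem:communicating} and the standard theory of communicating average-reward MDPs for the LP computation and initial-state independence. Your explicit $k$-step window-refilling argument just makes precise what the paper leaves as a one-line remark.
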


    Then there is a theorem in repeated games.
    \begin{myTheorem}\label{thm:communicating}
        There always exists a k-memory pure strategy best response to k-memory completely mixed strategy in infinitely repeated games, which is independent of the initial k outcomes.
    \end{myTheorem}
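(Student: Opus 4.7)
The plan is to reduce this repeated-game statement to the MDP result stated just above it, by making the correspondence between $k$-memory strategies and MDP policies precise and then invoking the communicating-MDP machinery already established.

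First, I would fix a completely mixed $k$-memory strategy $p_2$ for player 2 and recall from Section~\ref{sec:mdp} that the best-response problem for player 1 is modelled by an average-reward MDP whose state space $S = A^k$ records the most recent $k$ action profiles. In that construction, every policy $\pi : S \to \Delta(A_1)$ corresponds to a $k$-memory strategy $p_1$ of player 1, with pure policies mapping exactly to pure $k$-memory strategies; furthermore, since the stage reward $R(s, a_1) = u_1(s_k)$ depends only on the most recent action profile, the MDP's expected average reward $\rho^{\pi}(s)$ coincides with the limit-average payoff player 1 receives in the repeated game when the two players use $p_1$ and $p_2$ and the game is initialised so as to put the MDP in state $s$. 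Under this dictionary, \textbf{best response} becomes \textbf{optimal policy} and \textbf{initial $k$ outcomes} becomes \textbf{initial MDP state}.

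Second, I would invoke the machinery already set up. Because $p_2$ is completely mixed, every completely mixed policy $\pi^+ \in \Pi^+$ induces a transition matrix $\mathbf{M}(\pi^+)$ whose entries are all strictly positive, hence irreducible; by Lemma~\ref{lem:communicating} the MDP is therefore communicating. The preceding MDP version of Theorem~\ref{thm:communicating} then yields an optimal policy whose gain is independent of the starting state, and Lemma~\ref{lem:purepolicy} allows us to take this optimal policy to be pure. Translating the resulting pure policy $\pi^*$ back through the dictionary gives a pure $k$-memory strategy $p_1^*$ that is a best response to $p_2$ regardless of which of the $|A|^k$ possible initial configurations of the first $k$ action profiles is used. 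In particular, the convention of padding the pre-history with $c$'s is immaterial, which matches the second clause of the theorem.

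The main obstacle is the translation step itself, not any fresh analytic work: one must confirm that the limit-average payoff appearing in the repeated game truly coincides with the MDP's expected limit-average reward under a stationary policy. This follows because in a communicating average-reward MDP every stationary policy induces a finite Markov chain whose empirical reward averages converge almost surely to the expected gain on each recurrent class, so the pathwise limit in equation~(\ref{eq:avarageScore}) agrees with $\rho^{\pi}(s)$. Once this equivalence is recorded, the result is essentially a direct corollary of the MDP theorem proved just above, and no further case analysis is required.
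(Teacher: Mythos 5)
Your proposal is correct and takes essentially the same route as the paper: the theorem is stated there as a direct corollary of the MDP construction of Section~\ref{sec:mdp}, with communicability obtained from Lemma~\ref{lem:communicating} via the completely mixed opponent strategy, purity from Lemma~\ref{lem:purepolicy}, and independence of the initial $k$ outcomes from the state-independence of the optimal gain in a communicating MDP. One small correction: for $k \ge 2$ the induced matrix $\mathbf{M}(\pi^+)$ does \emph{not} have all entries strictly positive, since from a state $s$ only the $|A|$ states of the form $s_{2:k}+(a_1,a_2)$ are reachable in one step; irreducibility instead follows because any target state $t$ can be reached from any $s$ in exactly $k$ steps by appending the $k$ action profiles of $t$ one at a time, each transition having positive probability under completely mixed play --- which is precisely the ``append any action profile to any state'' argument the paper gives.
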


    Such pure strategy best response can be computed with existing MDP solvers, e.g. MDPtoolbox~\cite{chades2014mdptoolbox}.
    As an example, we compute the best response to Stochastic Tif-for-2-Tats(\emph{STF2T}). The game is IPD in Table~\ref{tab:PD} with (R,S,T,P)=(3,0,5,1). \emph{STF2T} cooperates with probability 0.1 when the other player defects for continuous two rounds, and cooperates with probability 0.9 otherwise.
    MDPtoolbox solves this model, the best response is to play $c$ and $d$ alternatively, and the average payoff is 2.67. See our source code for details.

\section{Discussion}
\subsection{Multi-agent Tournament}\label{sec:multiagent}
    In the previous sections we have discussed how to calculate the best response in two-agent repeated games. Our result mainly relies on the fact that there always exists a pure strategy best response. Now we consider the best response in multi-agent tournaments, which may not be a pure strategy.

    Due to the complexity of symbolic calculations, we conduct experiments of a multi-agent IPD tournament instead of giving an analytical proof.
    Consider a tournament of eleven one-memory agents, namely one $\mathbf{p}$, nine $\mathbf{q}$ and one $\mathbf{u}$. Suppose $(R,S,T,P) = (3,0,5,1)$, $\mathbf{q} = ( 0.9, 0.5, 0.2, 0.1 ) $ and
    $\mathbf{u}=( 0.4, 0.8, 0.2, 0.6 )$, we want to compute the optimal $\mathbf{p}$ that maximizes his payoff. Strategy $\mathbf{p}$ plays with every $\mathbf{q}$ and $\mathbf{u}$ respectively and the final score takes the average of all games, which can be calculated according to Eq.~(\ref{eq:sXsY}),
    \begin{equation}\label{eq:tournament}
        s_{\mathbf{p}} = 0.9 * \frac{D(\mathbf{p},\mathbf{q},\mathbf{S_X})}{D(\mathbf{p},\mathbf{q},\mathbf{1})}
        + 0.1 * \frac{D(\mathbf{p},\mathbf{u},\mathbf{S_X})}{D(\mathbf{p},\mathbf{u},\mathbf{1})}
    \end{equation}
    In two-agent games, there always exists a pure strategy best response. As is shown in Theorem~\ref{thm:mistort} and Theorem~\ref{thm:ungrateful}, the best response to $\mathbf{q}$ is \emph{Always Cooperate}, while the best response to $\mathbf{u}$ is \emph{Always Defect}. Now we consider whether there is a pure strategy one-memory best response $\mathbf{p}^*$ in this tournament .

    First we let $\mathbf{p}$ be a pure strategy and compute the values of $s_{\mathbf{p}}$. There are $2^4 -1 = 15$ pure strategies, where \emph{Repeat} strategy $\mathbf{p}=(1,1,0,0)$ is excluded. Among them, strategy \emph{Tit-for-Tat} $\mathbf{p}=(1,0,1,0)$ receives highest payoff of 1.90. However, after we consider mixed strategies, we found that strategy $\mathbf{p}=(1,0.9, 0 ,0.1)$ reaches a higher payoff of 2.02.

    This specific example shows that there may not exist a pure strategy best response in general multi-agent tournaments.

\subsection{Limitation of \emph{Z3}}
    Although \emph{Z3} is effective in solving linear constraints and successfully prove our theorems, it sometimes fails to give a solution in several hours when the constraint is a non-linear combination of several variables. We take comprehensive measures to overcome this challenge. (1) \textbf{Avoid quantifiers}. A mixture of $\forall$ and $\exists$ prevents us from efficient proof. We avoid using both quantifiers and manually simplify some formulas. (2) \textbf{Simplify formulas}. We take advantage of domain knowledge to remove fractions, such as in the proof of Theorem~\ref{thm:DNot0} we first prove monotonicity and prove all extreme values are less than zero.
    (3) \textbf{Break down.} We solve one clause of conjunction or disjunction at one time. Usually a theorems requires every clause has the same value of {\it true} or {\it false}, such as Eq. (\ref{eq:bestLEQ4}).
    (4) \textbf{Remove variables}.
    When SMT cannot solve a theorem, we have to replace some variables with concrete values to get some conjectures or a weaker theorem. In section~\ref{sec:multiagent} we compute a counter-example instead of prove it analytically.

\subsection{Backward Induction}\label{sec:backinduction}
    There has been many debates on backward induction \cite{kreps1982rational}\cite{binmore1997rationality}. For finite n round IPD, playing defect is the best strategy in the one-shot game at n-th round because there is no further possibility of reciprocity. Since both rational players will play defect in the n-th round, there is no reason to cooperate in the (n-1)-th round. By backward induction, \emph{Always Defect} is the only equilibrium of finite iterated prisoner's dilemma.

    To explain the emergence of cooperation, it is usually assumed that the game is played for infinite rounds, or an unknown number of rounds\cite{axelrod1981evolution}. Although this assumption invalidates backward induction, there still lacks an explanation of cooperation. Actually, when people conduct backward induction, they implicitly assume that the outcome of previous round has no effect on the decision of current round. According to the corollary of Theorem \ref{thm:ungrateful}, when someone is indifferent to the outcome of previous round, there is no reason to cooperate with him. Therefore, we hold the view that the emergence of cooperation cannot be explained without taking previous history into consideration.

\section{Conclusion}
    In this paper we compute best responses to mixed strategies in repeated games. Our main result shows that there always exists a pure strategy best response in two-agent repeated games. Based on this result, we analyze one-memory strategies, give the method to compute best response, and discover new theorems in the iterated prisoner's dilemma. The work enhances our comprehension of the IPD and explains the evolutionary behavior left over in~\cite{press2012iterated}.

    We generalize this result to the best response to k-memory strategies. Such problem is modeled as MDP and solved with existing solvers.
    In a multi-agent tournament, however, there may not exist a pure strategy best response. As a result, computing the best strategy in a tournament
    where an agent should take the same strategy to all other agents
    is still an open question.

    As most calculations and proofs are conducted by computer programs, we release all our source code for verification. Source code is uploaded as additional files.

\bibliographystyle{named}
\bibliography{RepeatedGames,AlgrithmicGameTheory}

\end{document}